\documentclass[pra,reprint,showpacs,superscriptaddress]{revtex4-1}
\usepackage[T1]{fontenc}
\usepackage{amsmath, amsthm, amssymb, mathtools, dsfont}
\usepackage{times}
\usepackage{graphicx}
\usepackage{dcolumn}
\usepackage{bm,bbm}
\usepackage[normalem]{ulem}
\usepackage{color}
\usepackage[usenames,dvipsnames]{xcolor}
\usepackage{esint}
\usepackage{paralist}


\usepackage{geometry}
\geometry{tmargin=2cm,bmargin=2cm,lmargin=1.5cm,rmargin=1.5cm}

\definecolor{myurlcolor}{rgb}{0,0,0.7}
\definecolor{myrefcolor}{rgb}{0.8,0,0}
\usepackage[unicode=true,pdfusetitle, bookmarks=true,bookmarksnumbered=false,bookmarksopen=false, breaklinks=false,pdfborder={0 0 0},backref=false,colorlinks=true, linkcolor=myrefcolor,citecolor=myurlcolor,urlcolor=myurlcolor]{hyperref}

\DeclareGraphicsExtensions{.png,.pdf,.eps}
\graphicspath{ {./figures/} }
\usepackage{epstopdf}


\setlength{\belowcaptionskip}{-10pt}

\makeatletter

 \theoremstyle{plain}
 
 \theoremstyle{plain}
 \newtheorem{lem}{Lemma}
 \theoremstyle{plain}
 \newtheorem{thm}{Theorem}
 \theoremstyle{plain}
 \newtheorem{exa}{Example}
 \theoremstyle{plain}
 
 \theoremstyle{plain}
 
  \theoremstyle{plain}
 \newtheorem{defn}{Definition}
 \theoremstyle{remark}
 \newtheorem*{rem*}{Remark}
 \theoremstyle{plain}
  

\newcommand{\scalar}[2]{\langle #1 | #2 \rangle}
\newcommand{\ketbra}[2]{| #1 \rangle \langle #2 |}
\newcommand{\ket}[1]{| #1 \rangle}
\newcommand{\bra}[1]{\langle #1 |}

\newcommand{\1}{\mathbbm{1}} 



\renewcommand{\exp}{\mathrm{exp}}
\DeclareMathOperator{\tr}{tr}

\renewcommand{\H}{\mathcal{H}}

\newcommand{\defeq}{\coloneqq}

\renewcommand{\C}{\mathbb{C}} 
\newcommand{\M}{\mathbf{M}} 
\newcommand{\N}{\mathbf{N}} 
\renewcommand{\Pr}{\mathrm{Pr}} 

\renewcommand{\P}{\mathbf{P}} 
\newcommand{\PP}{\mathcal{P}} 


\newcommand{\rbracket}[1]{\left(#1\right)} 
\newcommand{\cbracket}[1]{\left\{#1\right\}} 

\newcommand{\eq}[1]{Eq.~\eqref{#1}}

\newcommand{\SP}{\mathbb{S}\mathbb{P}} 


\global\long\global\long\global\long\def\bk#1#2{\mbox{\ensuremath{\ensuremath{\langle#1|#2\rangle}}}}
\global\long\global\long\global\long\def\kb#1#2{\mbox{\ensuremath{\ensuremath{\ensuremath{|#1\rangle\!\langle#2|}}}}}


\begin{document}

\title{Simulating all quantum measurements using only projective measurements and postselection}

\author{Micha\l\ Oszmaniec}
\email{michal.oszmaniec@gmail.com}
\affiliation{ 
Institute of Theoretical Physics and Astrophysics, National Quantum Information Centre, Faculty of Mathematics, Physics and
Informatics, University of Gdansk, Wita Stwosza 57, 80-308 Gda\'nsk, Poland}

\author{Filip B. Maciejewski}
\email{filip.b.maciejewski@gmail.com}
\affiliation{University of Warsaw, Faculty of Physics, Ludwika Pasteura 5, 02-093 Warszawa, Poland}

\author{Zbigniew Pucha{\l}a}
\email{z.puchala@iitis.pl}
\affiliation{Institute of Theoretical and Applied Informatics, Polish Academy
of Sciences, ulica Ba{\l}tycka 5, 44-100 Gliwice, Poland}
\affiliation{Faculty of Physics, Astronomy and Applied Computer Science, 
Jagiellonian University, ul. {\L}ojasiewicza 11,  30-348 Krak{\'o}w, Poland}

\begin{abstract}
We report an alternative scheme for implementing generalized quantum measurements that does not require the usage of auxiliary system. Our method utilizes solely: (a) classical randomness and post-processing, (b) projective measurements on a relevant quantum system and (c) postselection on non-observing certain outcomes. The scheme implements arbitrary quantum measurement in dimension $d$ with the optimal success probability $1/d$. We apply our results to bound the relative power of projective and generalised measurements for unambiguous state discrimination. Finally, we test our scheme experimentally on IBM's quantum processor. Interestingly, due to noise involved in the implementation of entangling gates, the quality with which our scheme implements generalized qubit measurements outperforms the standard construction using the auxiliary system.
\end{abstract}
\maketitle

\section{Introduction}
Every quantum information or quantum computing protocol contains,
as a subroutine, a measurement of the quantum state. 
Quantum theory admits
measurement procedures that are more general then the commonly known
projective measurements (PMs) of a quantum observable
on a quantum system of interest. Indeed, the most general quantum measurements
can be realized by projective measurements on a system extended by
the suitable ancilla \cite{Peres2006}. Such generalized measurements
are mathematically described by Positive Operator-Valued Measures (POVMs)
and play important role in many areas of quantum information science
such as quantum tomography \cite{Renes2003,OptTomography2009}, state discrimination \cite{Chefles2000,Barnett09,StructureStateDISC},
(multi-parameter) quantum metrology \cite{MultiparamMetroRev} or
quantum computing \cite{opotHSP}. They are also relevant in studies
of foundations on quantum theory \cite{Busch2003,ContextualitySpekkens},
nonlocality \cite{Barrett2002,Brunner2014,Kleinmann2016a,Kleinmann2016b} and randomness generation
\cite{Acin2016}.

Projective measurements form a subset of POVMs and hence are generally
less powerful for information processing. However,  there are two issues that need to be addressed before one decides to implement generalized measurements in practice.  The first problem is that POVMs
are often difficult to realize as their implementation 
typically requires control and manipulation over additional degrees
of freedom \cite{Exp2009,FoundChile2011,Ota2012} (such as, for example,
path in the case of quantum states encoded in photon polarisation). The second problem is that the relative power of projective and generalized measurements for quantum information processing remains poorly understood \footnote{There exist problems for which PMs
offer the optimal solution. The examples include: single parameter
metrology \cite{Giovannetti2011}, discrimination between two quantum
states \cite{Nielsen2010}, minimal error state discrimination of
linearly independent pure states \cite{EldarMESD2003}, or certain
variants of random access codes \cite{Ambainis2008}.},
especially for Hilbert spaces of large dimension (see however
 \cite{Oszmaniec2017,Hirsch2017,Leo2017,Oszmaniec2019,Uola2018,Takagi2019}). The main aim of this work is to provide an alternative method for implementation of generalized measurements and to advance the understanding of the relative power of POVMs and PMs.

We start by presenting a new scheme that realizes arbitrary
POVM without the need to extend the Hilbert
space \footnote{In some physical systems certain  POVMs are easier to implement than PMs. In particular, in quantum optics \cite{bachor2018guide}   homodyne and heterodyne measurements effectively implement nontrivial POVMs on the Fock space describing quantum states of light.}. Specifically, our method uses only (a) classical randomness
and post-processing, (b) PMs (acting only on
a Hilbert space of interest) and (c) postselection on non observing
certain measurement outcomes. The price that we need to pay is that
in a given experimental run, the measurement is carried out with
success probability $1/d$. We prove that this number is optimal in
a sense that there always exist measurements for
which success probability cannot be higher. Our method can be regarded as the manifestation of the following trade-off. Namely, in order
to implement a generalized measurement an experimentalist can
either implement a complicated PM on a system coupled to the
ancilla or implement simpler PMs and apply postselection.

Moreover, we use our method to give insight into
the question of relative power between projective and generalized
measurements for unambiguous state discrimination (USD) \cite{DIEKS1988}. Specifically, we show
that in this scenario the ratio between optimal discrimination probabilities,
when using POVMs and projective measurements, is at most $d$. We also give examples of ensembles of states for
which this bound is essentially optimal.

Finally, we demonstrate our method experimentally on IBM's quantum processor
\cite{ibm_q_experience,qiskit,cross_qasm}. 
We implement generalized qubit POVMs via
our scheme and via the Naimark construction \cite{Peres2006} that uses PMs 
on two qubits. We compare the quality of two implementations by performing
tomography of measurement operators. Interestingly, due to noise involved in
implementation of entangling gates, the quality with which our scheme
realizes POVMs is higher than the one obtained with the Naimark method.

The the rest of paper is structured as follows. 
First, in Section \ref{sec:Not} we establish the notation and main concepts that will be used in the rest of the paper. 
Second, in Section \ref{sec:postSIM} we present a notion of simulation of quantum measurements via postselection and projective measurements. 
This part contains also the main results of our work (Theorems \ref{thm:mainRES}  and \ref{thm:mainRES2}). 
In the following Section \ref{sec:USD} we give application of our simulation scheme to the problem of relative advantage of projective and generalized quantum measurements for the problem of unambiguous state discrimination. 
In Section \ref{sec:IBM} we provide experimental illustration of our findings on IBM's quantum device.
The Section \ref{sec:Concl} contains concluding remarks and further  research directions.  
We complement the main part of the article with Appendix \ref{app::details_ex_3} (containing proofs of some technical statements omitted in the main text), and  Appendices \ref{app:QMT}, \ref{app:povms_explicit}  and \ref{app:circuits} (containing details of the experimental procedures on IBM Q experience).

\section{Notation nad main concepts}\label{sec:Not}

 A $n$-outcome POVM on $d$-dimensional space can be regarded as a vector $\M=\left(M_{1},\ldots,M_{n}\right)$ of non-negative operators satisfying $\sum_{i=1}^{n}M_{i}=\1$, where $\1$
is the identity on $\C^d$. The operators $(\M)_i \defeq M_{i}$ are called the effects of $\M$. 
According to Born's rule, when the quantum state $\rho$ is measured by a POVM $\M$, the probability of obtaining the outcome $i$ is given by $\Pr(i|\rho,\M)=\mathrm{tr}\left(M_{i}\rho\right)$. 
We denote the set of POVMs on $\C^d$ with $n$ outcomes by $\PP\left(d,n\right)$. 
Given two POVMs $\M,\N\in\PP\left(d,n\right)$, their convex combination $p\M+(1-p)\N$ is the POVM with $i$-th effect given by $\left[p\M+\left(1-p\right)\N\right]_i \defeq pM_{i}+(1-p)N_{i}$.
 Taking convex combinations of measurements is typically referred to as \emph{randomisation} as it corresponds to realizing POVMs $\M$ and $\N$ with certain probabilities and then combining the outcomes. Extremal POVMs are the measurements that cannot be expressed by a convex combination of two different POVMs. PMs are POVMs whose effects are orthogonal projectors.
(notice that some of the outputs can have null effects and that effects are not required to be rank-one).

In \cite{Oszmaniec2017} the class of \emph{projective simulable} measurements was introduced. 
By definition, measurements belonging to this class can be realized by randomisation followed by classical post-processing (see \cite{Buscemi2005,Haapasalo2012,Sentis2013} for a detailed exposition of these concepts) of some protective measurements $\P$ acting on $\C^d$ alone. Short mathematical description of those notions is as follows. Let's consider the POVM $\M$ which has a convex decomposition $\M=\sum_i p_i \P^i$ into some projective measurements $\cbracket{\P^i}$, with $\sum_i p_i=1$ and $p_i\geq0$ for all $i$.
Set $\cbracket{p_i}$ may be interpreted as a probability distribution for randomization of $\cbracket{\P^i}$.
Randomized implementation of $\cbracket{\P^i}$ leads to the simulation of $\M$ by projective measurements $\cbracket{\P^i}$, so far without any post-processing.
Consider now a measurement $\N$, which is related to $\M$ via post-processing, i.e., $N_j=\sum_k q(j|k)M_k $, where the numbers $\cbracket{q\rbracket{j|k}}$ correspond to particular post-processing strategy, with $\sum_j q\rbracket{j|k}=1$ and $q\rbracket{j|k}\geq0$ for all $j,k$.
Therefore, randomization of $\cbracket{\P^i}$ according to probability distribution $\cbracket{p_i}$, followed by post-processing specified by $\cbracket{q\rbracket{j|k}}$, realizes the simulation of $\N$ by projective measurements $\cbracket{\P^i}$. Such simulation should be understood in terms of sampling from the statistics that one would have obtained if the POVM $\N$ (or $\M$) was implemented directly.
We denote the class of projective simulable $n$-outcome POVMs on $\C^d$ by $\SP(d,n)$  . 
Clearly, no ancillary system or extra dimension are needed to realize projective simulable measurements.
However, not all measurements can be implemented in this manner. 
In particular, all extremal but not projective measurements are outside $\SP(d,n)$ \cite{Oszmaniec2017}. 

\section{Simulation of measurements with postselection}\label{sec:postSIM}

We will be interested in measurements that can be realized by projective simulable measurements together with postselection.

\begin{defn}[Simulation of POVMs by postselection]
 \label{def:postSEL}Let $\M\in\PP(d,n)$ and $\N\in\PP(d,n')$
be $n$- and $n'$-outcome POVMs on $\mathbb{C}^{d}$ and let $n'>n$.
We say that $\M$ can be simulated by $\N$ by postselection
if for all quantum states $\rho$ and for all $i\leq n$
\begin{equation}
\Pr(i|\rho,\M)=\frac{\Pr(i|\rho,\N)}{\Pr(i\leq n|\rho,\N)}\,.\label{eq:defPOST}
\end{equation}
In other words, all the statistics of measurement of $\M$
can be interpreted as statistics of $\N$ \textit{conditioned}
on not observing particular outcomes. 
\end{defn}

While Definition \ref{def:postSEL} is operationally well-motivated, it is also cumbersome to work with.
Let us introduce a measurement $\M_{q}\in\PP(d,n+1)$ via 
\begin{equation}
\M_{q}\coloneqq\left(qM_{1},\ldots,qM_{n},\left(1-q\right)\1\right)\ .\label{eq:defMQ}.
\end{equation}
 It turns out that Eq.(\ref{eq:defPOST}) is equivalent to the existence of $q\in(0,1]$ such that
\begin{equation}
\M_{q}=\left(N_{1},\ldots,N_{n},\sum_{i>n}N_{i}\right)\ .\label{eq:equivPOST}.
\end{equation}
\begin{proof}
	Clearly, Eq.(\ref{eq:equivPOST}) implies Eq.(\ref{eq:defPOST}). 
	We will now prove the reverse.
	Let $\M\in\PP(d,n) $ and $\N\in\PP(d,n')$ be POVMs on $\C^d$ such that for all quantum states $\rho$ and for all $i\leq n$, the \eq{eq:defPOST}
	We claim now that if \eqref{eq:defPOST} holds then  $\Pr(i\leq n| \rho, \N)$ does not depend on $\rho$.  
	To see this we consider a state $\rho = \alpha \sigma + (1-\alpha) \tau$, for some $0\leq \alpha \leq 1$. By the linearity of Born rule we obtain
	
	\begin{equation}
	\begin{split}
	\Pr(i| \rho, \M) 
	= 
	\frac{
		\alpha \Pr(i| \sigma, \N) + (1-\alpha) \Pr(i| \tau, 
		\N)
	}{
		\alpha \Pr(i\leq n| \sigma, \N) + (1-\alpha) \Pr(i 
		\leq n| \tau, 
		\N)}
	\end{split}\ .
	\end{equation}
	The above, by definition is a linear function of $\alpha$, and it is possible if and only if  $ \Pr(i\leq n| \sigma, \N) =  \Pr(i 
	\leq n| \tau, \N)$. Since states $\sigma$ and $\tau$ were chosen 
	arbitrarily we conclude that $Pr(i\leq n| \rho, \N)$ does not depend 
	on $\rho$. 
It follows that $\sum_{i\leq n}N_{i}=q\1$,
	where $q>0$ is a proportionality constant that can be interpreted
	as the \textit{success probability} of implementing the measurement
	$\M$ via the POVM $\N$.
	
\end{proof}

\begin{figure}
\begin{centering}
\includegraphics[scale=0.5]{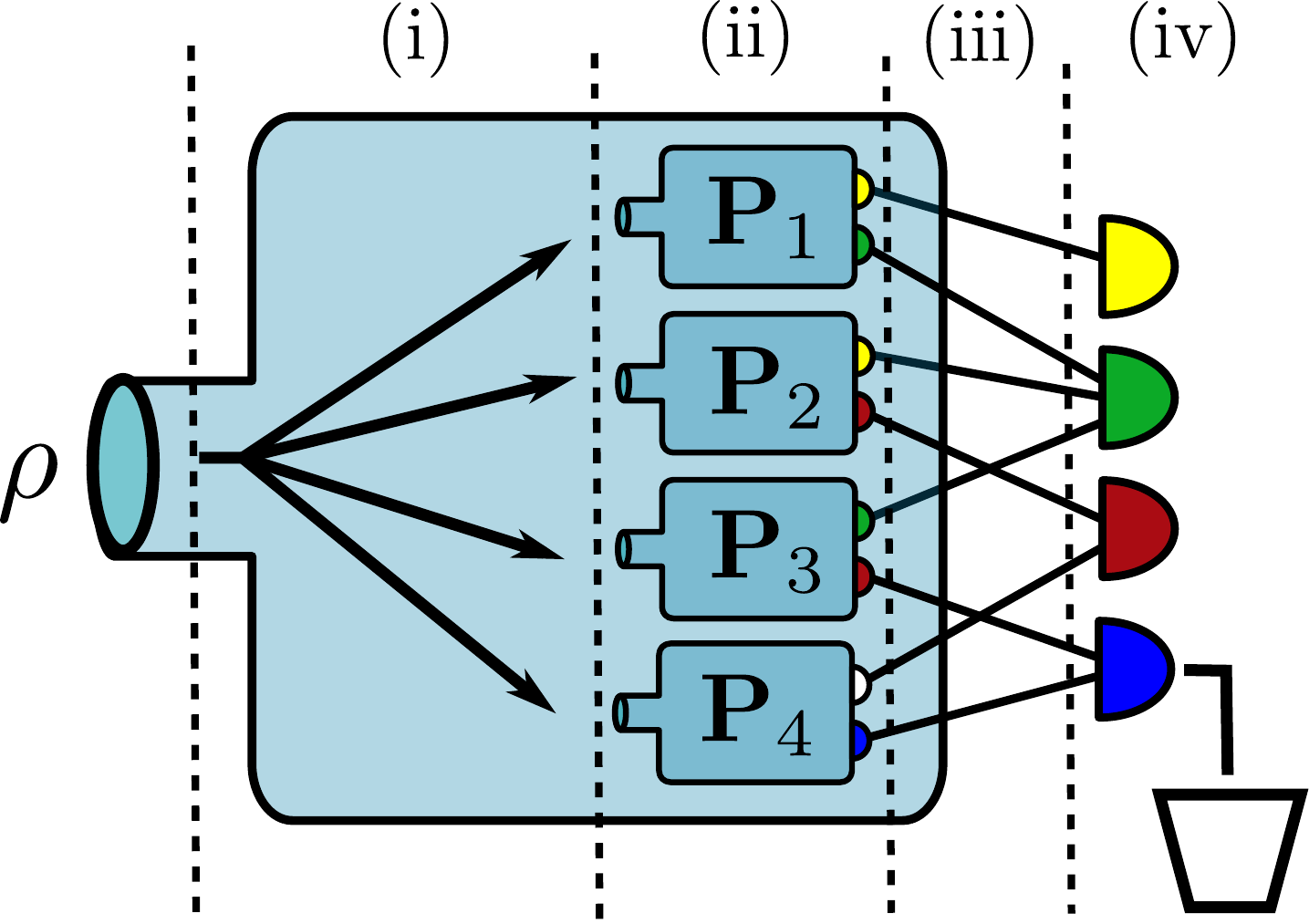}\caption{\label{fig:Schem}The idea of simulation
of a generalized measurement with projective measurements
and postselection. The protocol consists of four stages: (i) classical
randomisation, (ii) projective measurements performed on a relevant
quantum system, (iii) post-processing of obtained outcomes, and finally
(iv) postselection on non-observing the last outcome.}
\par\end{centering}
\end{figure}

We can now give a formal definition of measurement simulable by projective
measurements and postselection \footnote{The set of projective simulable measurements is closed under
post-processing. Therefore, simulation via $\N\in\mathbb{SP}(d,n+k)$
already implies simulation via $\N'\in\mathbb{SP}(d,n+1)$
and hence without loss of generality we limitted ourselves to measurements
with $n+1$ outcomes in Definition \ref{def:mathDEF}.} (see Fig. \ref{fig:Schem}).
\begin{defn}[Quantum measurements simulable by projective measurements and postselection]
\label{def:mathDEF}We say that a POVM $\M\in\PP(d,n)$
can be simulated by projective measurements and postselection if there
exists a projective simulable measurement $\N\in\mathbb{SP}(d,n+1)$
such that Eq.(\ref{eq:defPOST}) holds. Or, to put it differently,
there exists $q>0$ such that $\M_{q}=\N\in\mathbb{SP}(d,n+1).$
The highest number $q$ such that the above condition holds is the\textit{
maximal success probability} with which $\M$ can be implemented
when we are allowed to use only PMs and postselection.
\end{defn}

Recently it was shown \cite{Oszmaniec2017}  that every quantum measurement on finite-dimensional system can be implemented by PM on this system extended by the ancilla of the same dimension. Our first result shows that, somewhat surprisingly, any generalized
measurement in finite-dimensional quantum system can be implemented via PMs and postselection.
\begin{thm}[Every quantum measurement can be simulated by projective measurements
and postselection]
\label{thm:mainRES}Let $\M\in\PP(n,d)$ be a quantum
measurement on $\mathbb{C}^{d}$. Then $\M$ can be simulated
by projective measurements and postselection with success probability
$q=1/d$. In other words we have $\M_{1/d}\in\mathbb{SP}(d,n+1)$.
\end{thm}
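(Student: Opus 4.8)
The plan is to exhibit the POVM $\M_{1/d}$ explicitly as a convex combination (a randomisation) of projective measurements, with no post-processing at all; the dimension $d$ enters only through the single identity $\tr\1=d$. First I would pass to rank-one effects by taking the spectral decomposition of every effect of $\M$: write $M_i=\sum_j\lambda_{ij}P_{ij}$ with $\lambda_{ij}\geq 0$ and $P_{ij}$ rank-one orthogonal projectors. The completeness relation $\sum_i M_i=\1$ then reads $\sum_{i,j}\lambda_{ij}P_{ij}=\1$, and taking traces (each $\tr P_{ij}=1$) yields the key identity $\sum_{i,j}\lambda_{ij}=d$.

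Next, for every index pair $(i,j)$ I would define an $(n+1)$-outcome measurement $Q^{(ij)}$ by placing the projector $P_{ij}$ in slot $i$, the complementary projector $\1-P_{ij}$ in the ``failure'' slot $n+1$, and the null operator $0$ in every remaining slot $k\leq n$. Since $0$, $P_{ij}$ and $\1-P_{ij}$ are all orthogonal projectors summing to $\1$, each $Q^{(ij)}$ is a projective measurement in $\PP(d,n+1)$ --- null and non-rank-one effects being permitted --- hence $Q^{(ij)}\in\mathbb{SP}(d,n+1)$. I would then form the convex combination $\sum_{i,j}\mu_{ij}Q^{(ij)}$ with weights $\mu_{ij}\defeq\lambda_{ij}/d$, which are non-negative and sum to $1$ by the trace identity above.

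Finally I would compute the effects of this combination. For $k\leq n$ only the terms with $i=k$ contribute to slot $k$, so the $k$-th effect is $\sum_j\mu_{kj}P_{kj}=\tfrac1d\sum_j\lambda_{kj}P_{kj}=\tfrac1d M_k$; while the last effect is $\sum_{i,j}\mu_{ij}(\1-P_{ij})=\bigl(\sum_{i,j}\mu_{ij}\bigr)\1-\tfrac1d\sum_{i,j}\lambda_{ij}P_{ij}=\1-\tfrac1d\1=(1-\tfrac1d)\1$. Hence $\sum_{i,j}\mu_{ij}Q^{(ij)}=\M_{1/d}$, so $\M_{1/d}$ is a randomisation of projective measurements and therefore lies in $\mathbb{SP}(d,n+1)$. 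By Definition \ref{def:mathDEF} this is precisely the assertion that $\M$ can be simulated by projective measurements and postselection with success probability $q=1/d$.

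I do not expect a genuine technical obstacle: once the construction is in place, the verification is a one-line trace computation. The real content is the idea of the construction --- in particular, routing \emph{all} of the unwanted rank-one outcomes of each ``completed'' binary measurement into a single shared failure outcome. With that choice the extra effect is forced to equal $\1-\tfrac1d\sum_{i,j}\lambda_{ij}P_{ij}$, and matching it to $(1-q)\1$ pins down $q=1/d$; this also makes transparent why $1/d$ cannot be improved in general, which is the content of Theorem \ref{thm:mainRES2}.
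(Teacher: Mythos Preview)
Your proof is correct and follows essentially the same construction as the paper: randomise over binary projective measurements $(\kb{\psi}{\psi},\1-\kb{\psi}{\psi})$ with weights coming from the trace identity $\sum\lambda=d$, routing all ``$-$'' outcomes to slot $n+1$. The only cosmetic difference is that the paper first reduces to rank-one POVMs by invoking closure of $\mathbb{SP}$ under post-processing, whereas you do this reduction explicitly via the spectral decomposition of each effect --- arguably a cleaner way to present the same idea.
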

\begin{proof}
We prove this result by giving a concrete algorithm
that simulates any generalized measurement with success probability
$1/d$. Note that  it suffices
to give the simulation method for POVM having rank-one effects. Indeed,
any quantum measurement can be obtained from them via classical post-processing
\cite{Barrett2002,Buscemi2005,Haapasalo2012,Sentis2013} (note that not all extremal rank-one measurements are projective). Thus, if $\mathcal{M}_{1/d}\in\mathbb{SP}(d,n+1)$
holds for rank-one measurements, it will also hold for arbitrary measurements
since, by definition, classical post-processing of projective simulable
measurement is still projective simulable. 

The effects of rank-one measurement are of the form $M_{i}=\alpha_{i}\kb{\psi_{i}}{\psi_{i}}$,
with $\alpha_{i}\geq0$. From the condition $\sum_{i=1}^{n}M_{i}=\1$
we get $\sum_{i=1}^{n}\alpha_{i}=d$ and hence numbers $\alpha_{i}/d$
define a probability distribution. The method for simulating a rank
one measurement consists of three steps: (a) draw a label $i$ with
probability $p_{i}=\frac{\alpha_{i}}{d}$; (b) perform a PM $\left(P_{+}^{i},P_{-}^{i}\right)=\left(\kb{\psi_{i}}{\psi_{i}},\1-\kb{\psi_{i}}{\psi_{i}}\right)$;
(c) upon obtaining the outcome ``+'', return $i$, otherwise return
$n+1$. Clearly, this scheme realizes a measurement from $\mathbb{SP}(d,n+1)$.
Moreover, explicit computation shows that it implements $\text{\ensuremath{\M_{1/d}}}$.
\end{proof}
\begin{rem*}
A related protocol appeared in \cite{Hirsch2016} in the
context of deriving local POVM models for certain entangled states.
Also, a similar method was used in \cite{Oszmaniec2017} to simulate
a noisy version of a POVM $\M$, with effects $M_{i}^{'}=(1/d)M_{i}+(1-1/d)\,\frac{\mathrm{tr}(M_{i})}{d}\1$.
The difference between these approaches and the protocol given above
is the last step (iii), in which one identifies the ``wrong'' outcomes
- this allows to simulate any POVM $\M$ with PMs \textit{exactly} once we allow for postselection.
\end{rem*}

\begin{rem*}
Some experimental works \cite{usd_ref_Agnew} simulate statistics of POVM $\M$ by statistics of a number of PMs whose effects are \emph{proportional} to effects of $\M$. This is done by \emph{estimating}  the expectation values  $\tr(\rho \ketbra{\psi_i}{\psi_i} )$ (for a given quantum state $\rho$) and then classically processing the obtained experimental data (in analogy to the procedures performed in standard quantum tomography). We  stress that our method is conceptually different. 
Namely, in \emph{a single experimental run} our scheme either \emph{samples} from the correct probability distribution  or reports failure (see Example \ref{ex::tetrahedral_implementation} for illustration). 
Hence, our method provides a new operational interpretation of generalized quantum measurements.
\end{rem*}

Let us illustrate  the concepts presented above by considering the following example, which provides a detailed account of the simulation of a particular qubit POVM by projective measurements with postselection.
\begin{exa}\label{ex::tetrahedral_implementation}
	Consider a four-outcome tetrahedral measurement on $\mathbb{C}^{2}$, 
	$\M^{\mathrm{tetra}}=\left(\frac{1}{2}\kb{\psi_{1}}{\psi_{1}},\ldots,\frac{1}{2}\kb{\psi_{4}}{\psi_{4}}\right)$,
	where Bloch vectors associated to pure states $\kb{\psi_{i}}{\psi_{i}}$
	form a tetrahedron inscribed in the Bloch sphere \cite{Renes2003}.
	Having $\M^{\mathrm{tetra}}$ written in this form, it is straightforward to construct a new five-outcome measurement $\M^{\mathrm{tetra}}_{\frac{1}{2}}=\left(\frac{1}{4}\kb{\psi_{1}}{\psi_{1}},\ldots,\frac{1}{4}\kb{\psi_{4}}{\psi_{4}},\frac{1}{2}\1\right)$. 
	Note that first four effects are those of $\M^{\mathrm{tetra}}$ multiplied by $\frac{1}{d}$ ($\frac{1}{2}$ for qubits) and the last one is $\rbracket{1-\frac{1}{d}}\1$.  
	One can clearly see that $\M^{\mathrm{tetra}}_{\frac{1}{2}}\in\SP\rbracket{2,5}$.
	Concretely, to simulate $\M^{\mathrm{tetra}}_{\frac{1}{2}}$ by projective measurements, one constructs four PMs of the form $\rbracket{P_{+}^{i},P_{-}^{i}}=\rbracket{\ketbra{\psi_i}{\psi_i},\1-\ketbra{\psi_i}{\psi_i}}$.
	Then in every experimental run, one \emph{draws} a label "$i$" with probability $\frac{1}{4}$ (i.e., according to probability distribution given by coefficients standing next to $\ketbra{\psi_i}{\psi_i}$ in $\M^{\mathrm{tetra}}_{\frac{1}{2}}$, which in this case is uniform) and implements a projective measurement $\rbracket{P_{+}^{i},P_{-}^{i}}$.
	If the obtained outcome is "$+$", one identifies it as "$i$", and if it is "$-$", one identifies it as "$5$".  
	Finally, upon post selecting on not obtaining the $5$th outcome (this happens with probability $1/2$ -- more generally with probability $1/d$ in dimension $d$) we obtain a sample from the measurement $\M^{\mathrm{tetra}}$. 
Therefore, $\M^{\mathrm{tetra}}$  can be simulated by projective measurements with postselection with success probability $1/2$.

\end{exa}

What is the highest success probability for which all measurements on $\mathbb{C}^{d}$ can be implemented with projective
measurements and postselection? Interestingly, for any dimension $d$,
there always exist generalized measurements that cannot be simulated
with probability higher than $1/d$. 
\begin{thm}[Optimality of the simulation protocol]\label{thm:mainRES2}
 For any dimension $d$, there exists a measurement $\M^{\ast}\in\PP(d,d^{2})$
that cannot be simulated by PMs and postselection
with success probability higher than $1/d$ (i.e. $1/d$ is the maximal
$q$ for which $\M_{q}^{\ast}\in\mathbb{SP}(d,d^{2}+1$)).
The protocol presented above attains the success probability $1/d$
and in this sense can be considered optimal. 
\end{thm}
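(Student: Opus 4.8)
The plan is to exhibit an explicit POVM $\M^{\ast}$ whose simulation cost is provably $1/d$, by combining the characterization of simulation-by-postselection (the equivalence $\M_q = \N \in \SP(d,n+1)$) with a structural obstruction to membership in $\SP$. The natural candidate is an informationally complete rank-one POVM with $d^2$ outcomes that is \emph{extremal} as a POVM — for instance a SIC-POVM $\M^{\ast}=\left(\tfrac{1}{d}\kb{\psi_i}{\psi_i}\right)_{i=1}^{d^2}$, or more generally any extremal rank-one POVM with $d^2$ outcomes and effects $\alpha_i \kb{\psi_i}{\psi_i}$. By Theorem \ref{thm:mainRES} we already know $\M^{\ast}_{1/d}\in\SP(d,d^2+1)$, so the whole content is the upper bound: if $\M^{\ast}_{q}\in\SP(d,d^2+1)$ then $q\le 1/d$.

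First I would translate the hypothesis. Suppose $q>0$ and $\M^{\ast}_q = (qM_1,\dots,q M_{d^2},(1-q)\1)$ lies in $\SP(d,d^2+1)$, i.e.\ it decomposes as a convex combination $\sum_k p_k \P^k$ of projective measurements $\P^k$ on $\C^d$. The key step is to look at a \emph{fixed} outcome $i\le d^2$ and collect the contributions to the $i$-th effect: $qM_i = q\alpha_i\kb{\psi_i}{\psi_i} = \sum_k p_k (P^k)_i$, where each $(P^k)_i$ is an orthogonal projector. Since the left-hand side is a rank-one operator, every projector $(P^k)_i$ appearing with $p_k>0$ must be either $0$ or exactly $\kb{\psi_i}{\psi_i}$ — a rank-one projector cannot be written as a nonnegative combination of higher-rank projectors without the ranks collapsing. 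This rigidity is the heart of the argument: it forces each projective measurement $\P^k$ in the decomposition to have all its nonzero effects among the rank-one projectors $\{\kb{\psi_i}{\psi_i}\}$ together with possibly the "failure" direction, i.e.\ each $\P^k$ is built from a subset of the $\kb{\psi_i}{\psi_i}$ that are mutually orthogonal and hence of size at most $d$, plus a leftover projector feeding into the last outcome.

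Next I would count. Let $S_k\subseteq\{1,\dots,d^2\}$ be the set of labels $i$ for which $(P^k)_i=\kb{\psi_i}{\psi_i}$; orthogonality gives $|S_k|\le d$, and completeness of $\P^k$ forces $\sum_{i\in S_k}\kb{\psi_i}{\psi_i} + R_k = \1$ with $R_k$ the projector routed to outcome $d^2+1$. Taking traces in $qM_i=\sum_k p_k (P^k)_i$ and summing over $i\le d^2$ yields $qd = q\sum_i \alpha_i = \sum_k p_k |S_k| \le \sum_k p_k\, d = d$, which only gives $q\le 1$ — so the naive count is not enough, and this is where the extremality of $\M^{\ast}$ must be used. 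The refinement: for \emph{each} projector $R_k$ we have $\tr R_k = d-|S_k|$, and the last-outcome condition reads $(1-q)\1=\sum_k p_k R_k$, so $(1-q)d=\sum_k p_k(d-|S_k|)$, equivalently $qd=\sum_k p_k|S_k|$ again — consistent but not sharper. The genuinely new input is that each $\P^k$ only "sees" $\le d$ of the $d^2$ states, so the decomposition $\sum_k p_k(P^k)_i = q\alpha_i\kb{\psi_i}{\psi_i}$ must, for the particular geometry of a SIC / extremal POVM, be incompatible with $q>1/d$; concretely, I expect one shows that $\frac{1}{q}\M^{\ast}$ restricted-and-renormalized along any such orthonormal subset cannot be completed to identity unless the $\alpha_i$ on that subset sum to $qd\le 1$ forces... — and then extremality of $\M^{\ast}$ rules out any nontrivial such "sub-PVM" decomposition with weight, pinning $q=1/d$.

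The main obstacle, and the step I would spend the most care on, is exactly this last point: converting "each PVM in the decomposition touches at most $d$ of the $d^2$ outcomes" into the quantitative bound $q\le 1/d$. I anticipate the clean route is to use that $\M^{\ast}$ is \emph{extremal}: if $q>1/d$ then one can rearrange $\sum_k p_k\P^k = \M^{\ast}_q$ to express $\M^{\ast}$ itself (after deleting/rescaling the last outcome) as a convex combination of two distinct POVMs — one supported on the PVM pieces and one absorbing the slack — contradicting extremality; the inequality $q\le 1/d$ is precisely the threshold at which no such slack exists. Verifying that a SIC-POVM (or exhibiting some extremal rank-one POVM with $d^2$ outcomes, which exists in every dimension) is extremal, and that the rearrangement is legitimate, are the technical pieces; everything else is the rank-one rigidity observation plus bookkeeping.
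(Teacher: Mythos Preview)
Your rank-one rigidity step is correct and matches the paper: each $(P^k)_i$ with $p_k>0$ is either $0$ or exactly $\kb{\psi_i}{\psi_i}$. But then you throw away the decisive feature of the candidate measurement. For a SIC-POVM (and for the extremal $d^2$-outcome POVM from D'Ariano et al.\ that the paper actually uses) the projectors $\kb{\psi_i}{\psi_i}$ are \emph{pairwise non-orthogonal}: $\bk{\psi_i}{\psi_j}\neq 0$ for all $i\neq j$. Two rank-one projectors that are not orthogonal cannot both appear as effects of the same projective measurement, so your sets $S_k$ satisfy $|S_k|\le 1$, not merely $|S_k|\le d$. With that single correction your own counting already finishes the proof: $qd=\sum_i q\alpha_i=\sum_k p_k |S_k|\le \sum_k p_k=1$, hence $q\le 1/d$. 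This is exactly the paper's argument.

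The attempt to rescue the weak bound $|S_k|\le d$ via extremality of $\M^{\ast}$ does not work as written. Extremality of $\M^{\ast}$ in $\PP(d,d^2)$ says nothing direct about decompositions of $\M^{\ast}_q$ in $\PP(d,d^2+1)$: for every $q<1$ one has the trivial decomposition $\M^{\ast}_q=q\,(\M^{\ast},0)+(1-q)\,(0,\ldots,0,\1)$, so $\M^{\ast}_q$ is never extremal, and ``deleting/rescaling the last outcome'' from a PVM $\P^k$ does not produce a POVM on the first $d^2$ labels unless $R_k=0$. There is no clean way to manufacture a nontrivial convex decomposition of $\M^{\ast}$ itself from the data $\sum_k p_k\P^k=\M^{\ast}_q$ without already knowing the structure of the $\P^k$, which is precisely what the non-orthogonality observation provides. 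In short: the missing idea is not extremality but the pairwise non-commutativity of the effects, which collapses $|S_k|$ from $d$ to $1$.
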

\begin{proof}
In \cite{Adriano2004} it was shown that there exists an extremal
quantum measurement $\M^{\ast}\in\PP(d,d^{2})$ with
$d^{2}$ pairwise non-commuting effects $M_{i}^{\ast}=(1/d)\kb{\psi_{i}}{\psi_{i}}$,
where $\kb{\psi_{i}}{\psi_{i}}$ are suitably-chosen pure states 
\footnote{{Concretely, the states $\ket{\psi_i}$ belong to the orbit of the discrete
Heisenberg-Weyl (generated by $d$- dimensional analogues of $X$
and $Z$ operators) group through the fiducial state $\ket{\psi_0}$, where $\mathcal{C}$
is a normalization constant (see Section 4.1 of \cite{Adriano2004}
for details).}}.
In the rest of the proof, to keep the notation compact, we will
identify $n\equiv d^{2}$. Consider now a modified measurement $\M_{q}^{\ast}$
(see Eq.(\ref{eq:defMQ}) for the definition of $\M_{q}$)
and assume $\M_{q}^{\ast}\in\mathbb{SP}(d,n+1)$ i.e. $\M_{q}^{\ast}$
is a convex combination of PMs $\left\{ \P^{\alpha}\right\} $,
$\M_{q}^{\ast}=\sum_{\alpha}p_{\alpha}\P^{\alpha}$
(recall that in general effects $P^{\alpha}_i$ of $\P^{\alpha}$ need not to be necessarily rank-one). 
Since operators $M_{i}^{\ast}$ are rank-one and effects
of measurements $\P^{\alpha}$ are orthogonal projectors,
for $i\leq n$ we have $P_{i}^{\alpha}=\lambda_{\alpha}\kb{\psi_{i}}{\psi_{i}}$
, with $\lambda_{\alpha}\in\left\{ 0,1\right\} $. In other words,
if for a given $\alpha$ and $i\leq n$ we have $P_{i}^{\alpha}\neq0$,
then necessarily $P_{i}^{\alpha}=\kb{\psi_{i}}{\psi_{i}}$.
As the operators $\kb{\psi_{i}}{\psi_{i}}$ do not commute with each
other, for each $\alpha$ we must have either $\P^{\alpha}=\P^{j}$,
where
\begin{equation}
\P^{j}\defeq\left(\overbrace{0,\ldots,0}^{j-1},\kb{\psi_{j}}{\psi_{j}},0,\ldots,0,\1-\kb{\psi_{j}}{\psi_{j}}\right)\,\label{eq:projDEC}
\end{equation}
for some $j\leq n$ or $\P^{\alpha}=\P^{n}\defeq(0,\ldots0,\1)$.
There are therefore only $n+1$ different PMs
that together  can simulate $\M_{q}^{\ast}$, i.e.,  $\M_{q}^{\ast}=\sum_{j=1}^{n+1}p_{j}\P^{j}$,
for some probability distribution $\left\{ p_{j}\right\} _{j=1}^{n+1}$.
Then, by using $\left(\M_{q}^{\ast}\right)_{j}=p_{j}P_{j}^{j}$
we obtain $q/d=p_{j}$ for $j\leq n$. Finally, from the inequality
$\sum_{j=1}^{n}p_{j}\leq1$ we obtain that $q\leq1/d$.
\end{proof}
\begin{rem*}
Analogous arguments show that all rank-one measurements $\M=\left(a_{1}\kb{\psi_{i}}{\psi_{i}},\ldots,a_{n}\kb{\psi_{n}}{\psi_{n}}\right)$
for which $\bk{\psi_{i}}{\psi_{j}}\neq0$ can be simulated with
PMs and postselection with success probability
at most $1/d$. Therefore, the protocol given in the proof of Theorem
\ref{thm:mainRES} is also optimal for this broad class of measurements. Of course, some measurements can be implemented with higher probability.
\end{rem*}

\section{Application to USD} \label{sec:USD}

We use our findings to limit the maximal advantage that POVMs offer
over projective measurements for unambiguous discrimination of quantum
states \cite{DIEKS1988,Barnett09,StructureStateDISC}. This task is about unambiguously
discriminating between (not necessarily orthogonal) signal states $\left\{ \rho_{i}\right\} _{i=1}^{n}$, each appearing with probability $p_{i}$. 
The problem of USD is the landmark example of the task for which POVMs offer advantage over projective measurements. It currently finds applications in quantum cryptography  \cite{usd_ref_brask,usd_ref_ko} and is still a subject of both theoretical \cite{usd_ref_Kawakubo,usd_ref_bergou}, as well as experimental studies \cite{usd_ref_becerra,usd_ref_Agnew,usd_ref_brask}.

If the signal states are generated from an ensemble $\mathcal{E}=\left\{ p_{i},\rho_{i}\right\} _{i=1}^{n}$ and measured with a POVM $\M\in\PP(d,n+1)$,
success probability for USD is given by
\begin{equation}
p_{\mathrm{USD}}\left(\mathcal{E},\M\right)=\sum_{i=1}^{n}p_{i}\mathrm{tr}(\rho_{i}M_{i}),\,\label{eq:USDsuccPROB}
\end{equation}
where measurement effects have to satisfy the constraints $\mathrm{tr(\rho_{i}M_{j})=0}$,
for $i\neq j$, which result from the unambiguity condition. Moreover,
the effect $M_{n+1}$ corresponds to the inconclusive result. In this
Letter we focus on ensembles consisting of pure signal stares, i.e., 
$\rho_{i}=\kb{\psi_{i}}{\psi_{i}}$. In this case unambiguous discrimination
is possible if and only if vectors $\ket{\psi_{i}}$ are linearly
independent. Given an ensemble $\mathcal{E}$, we define $p_{\mathrm{USD}}^{\mathrm{POVM}}\left(\mathcal{E}\right)$
and $p_{\mathrm{USD}}^{\mathbb{SP}}\left(\mathcal{E}\right)$ as the
optimal success probabilities of unambiguously discriminating states
from $\mathcal{E}$ via generalized and projective simulable measurements
(acting on $n=d$ dimensional space spanned by vectors $\ket{\psi_{i}}$),
respectively. 
The following result limits the maximal advantage that POVMs can offer over PMs for USD.
\begin{lem}
\label{lem:boundSUCC}For all ensembles of linearly independent pure states $\mathcal{E}=\left\{ p_{i},\kb{\psi_i}{\psi_i}\right\} _{i=1}^{d}$ , we have $p_{\mathrm{USD}}^{\mathrm{POVM}}\left(\mathcal{E}\right)\leq d\cdot p_{\mathrm{USD}}^{\mathbb{SP}}\left(\mathcal{E}\right)$. 
\end{lem}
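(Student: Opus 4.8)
The plan is to transform an optimal generalized USD strategy into a projective-simulable one at the cost of a factor $1/d$ in the success probability, using Theorem~\ref{thm:mainRES}. First I would pick a POVM $\M=(M_{1},\ldots,M_{d},M_{d+1})\in\PP(d,d+1)$ attaining $p_{\mathrm{USD}}^{\mathrm{POVM}}(\mathcal{E})$, where $M_{1},\ldots,M_{d}$ are the conclusive effects satisfying the unambiguity constraints $\tr(\kb{\psi_{i}}{\psi_{i}}M_{j})=0$ for $i\neq j$, and $M_{d+1}$ is the inconclusive effect. Such a maximizer exists because the feasible set cut out by these linear constraints inside the compact set $\PP(d,d+1)$ is itself compact and the objective \eqref{eq:USDsuccPROB} is continuous; alternatively one can run the argument with a maximizing sequence and pass to the limit at the very end.

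Second, I would apply Theorem~\ref{thm:mainRES} to $\M$, which yields $\M_{1/d}=\bigl(\tfrac{1}{d}M_{1},\ldots,\tfrac{1}{d}M_{d},\tfrac{1}{d}M_{d+1},(1-\tfrac{1}{d})\1\bigr)\in\SP(d,d+2)$. Since the family of projective-simulable measurements is closed under classical post-processing (already invoked in the proof of Theorem~\ref{thm:mainRES}), merging the last two outcomes of $\M_{1/d}$ produces $\N\defeq\bigl(\tfrac{1}{d}M_{1},\ldots,\tfrac{1}{d}M_{d},\,\tfrac{1}{d}M_{d+1}+(1-\tfrac{1}{d})\1\bigr)\in\SP(d,d+1)$, a $(d+1)$-outcome projective-simulable POVM on the same $d$-dimensional space $\mathrm{span}\{\ket{\psi_{i}}\}$.

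Third, I would verify that $\N$ is a bona fide projective-simulable USD strategy for $\mathcal{E}$ and read off its performance. Its conclusive effects are $N_{j}=\tfrac{1}{d}M_{j}$ for $j\leq d$, hence $\tr(\kb{\psi_{i}}{\psi_{i}}N_{j})=\tfrac{1}{d}\tr(\kb{\psi_{i}}{\psi_{i}}M_{j})=0$ for $i\neq j$: the unambiguity constraints are preserved under the rescaling, and $N_{d+1}$ serves as the inconclusive effect. Plugging into \eqref{eq:USDsuccPROB} gives $p_{\mathrm{USD}}(\mathcal{E},\N)=\sum_{i=1}^{d}p_{i}\tr(\kb{\psi_{i}}{\psi_{i}}N_{i})=\tfrac{1}{d}\sum_{i=1}^{d}p_{i}\tr(\kb{\psi_{i}}{\psi_{i}}M_{i})=\tfrac{1}{d}\,p_{\mathrm{USD}}^{\mathrm{POVM}}(\mathcal{E})$. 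As $\N\in\SP(d,d+1)$ is feasible, $p_{\mathrm{USD}}^{\SP}(\mathcal{E})\geq p_{\mathrm{USD}}(\mathcal{E},\N)=\tfrac{1}{d}\,p_{\mathrm{USD}}^{\mathrm{POVM}}(\mathcal{E})$, which rearranges to the claimed inequality. I do not expect a genuine obstacle here: the only delicate points are that discarding/merging outcomes keeps the measurement in $\SP$ (exactly closure under post-processing) and, if one insists on full rigor, the existence of the optimal $\M$ via the routine compactness or limiting argument noted above. It is worth adding that the factor $d$ is essentially best possible, since the ensembles exhibited next realize a ratio $p_{\mathrm{USD}}^{\mathrm{POVM}}/p_{\mathrm{USD}}^{\SP}$ arbitrarily close to $d$.
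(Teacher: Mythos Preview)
Your proof is correct and follows essentially the same route as the paper: take an optimal USD POVM $\M^\ast$, apply Theorem~\ref{thm:mainRES} to obtain $\M^\ast_{1/d}\in\SP(d,d+2)$, merge the last two outcomes into a single inconclusive effect, and read off the success probability $\tfrac{1}{d}p_{\mathrm{USD}}^{\mathrm{POVM}}(\mathcal{E})$. The additional care you take with existence of the maximizer and preservation of the unambiguity constraints is fine but not strictly necessary for the argument.
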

\begin{proof}
Let $\M^{\ast}=\left(M_{1}^{\ast},\ldots,M_{d+1}^{\ast}\right)$
be an optimal measurement for which $p_{\mathrm{USD}}\left(\mathcal{E},\M^{\ast}\right)=p_{\mathrm{USD}}^{\mathrm{POVM}}\left(\mathcal{E}\right)$.
We can use the protocol from the proof of Theorem \ref{thm:mainRES}
to construct the measurement $\M_{q}^{\ast}\in\mathbb{SP}(d,d+2)$.
By gluing outcomes $d+1$ and $d+2$ we get a projective simulable
measurement attaining success probability $\frac{1}{d}p_{\mathrm{USD}}^{\mathrm{POVM}}\left(\mathcal{E}\right)$.
Therefore, we have $\frac{1}{d}p_{\mathrm{USD}}^{\mathrm{POVM}}\left(\mathcal{E}\right)\leq p_{\mathrm{USD}}^{\mathbb{SP}}\left(\mathcal{E}\right)$.
\end{proof}
We now show that the above bound is essentially tight in the limit
of large $d$ by giving examples of ensembles $\mathcal{E}$ for which
$p_{\mathrm{USD}}^{\mathrm{POVM}}\left(\mathcal{E}\right)/p_{\mathrm{USD}}^{\mathbb{SP}}\left(\mathcal{E}\right)\approx d$.
We first state an auxiliary result that limits the power of projective
measurements for USD.
\begin{lem}
\label{lem:projSIMbound}Let $\mathcal{E}=\left\{ p_{i},\kb{\psi_{i}}{\psi_{i}}\right\} _{i=1}^{n}$
and let $\bk{\psi_{i}}{\psi_{j}}\ne0$ for $i\neq j$. Then we have
$p_{\mathrm{USD}}^{\mathbb{SP}}\left(\mathcal{E}\right)\le\max_{i}p_{i}$.
\end{lem}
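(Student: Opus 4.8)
The plan is to reduce the claim to a statement about a \emph{single} projective measurement, and then to exploit the fact that the signal states pairwise overlap. First, note that $\M\mapsto p_{\mathrm{USD}}(\mathcal{E},\M)$ is a linear functional and that $\mathbb{SP}(d,n+1)$ is the convex hull of projective measurements: randomisation of PMs is by definition a convex combination of PMs, and classical post-processing of a single PM $\P=(P_{1},\dots,P_{m})$ is again a convex combination of PMs, because any deterministic relabelling merges outcomes into a PM, a stochastic post-processing is a convex mixture of deterministic ones, and in particular replacing an outcome $P_{k}$ by $\lambda P_{k}$ and $(1-\lambda)P_{k}$ is the convex combination $\lambda(\dots,P_{k},0,\dots)+(1-\lambda)(\dots,0,P_{k},\dots)$. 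Hence the optimum defining $p_{\mathrm{USD}}^{\mathbb{SP}}(\mathcal{E})$ is attained at a projective measurement, and the unambiguity constraints descend to the components: if $M_{j}=\sum_{\alpha}q_{\alpha}P_{j}^{\alpha}$ with $q_{\alpha}\ge 0$ and $\mathrm{tr}(\rho_{i}M_{j})=0$ for $i\ne j$, then $\mathrm{tr}(\rho_{i}P_{j}^{\alpha})=0$ for every $\alpha$ with $q_{\alpha}>0$. So it suffices to bound $p_{\mathrm{USD}}(\mathcal{E},\P)$ for an arbitrary projective measurement $\P$ on $\mathbb{C}^{d}$ (here $d=n$) whose conclusive effects $P_{1},\dots,P_{n}$ satisfy $P_{j}\ket{\psi_{j'}}=0$ for all $j'\ne j$.

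Second, I would identify the shape of these conclusive effects. Fix $j$. Since the $\ket{\psi_{i}}$ are linearly independent and span $\mathbb{C}^{d}$, the space $\mathrm{span}\{\ket{\psi_{j'}}:j'\ne j\}$ is $(d-1)$-dimensional, so its orthogonal complement is spanned by a single unit vector $\ket{v_{j}}$, and $\bk{v_{j}}{\psi_{j}}\ne 0$ (else $\ket{\psi_{j}}$ would lie in that span, contradicting linear independence). The relation $P_{j}\ket{\psi_{j'}}=0$ for $j'\ne j$ says exactly that $\mathrm{range}(P_{j})\subseteq\mathbb{C}\ket{v_{j}}$, and since $P_{j}$ is an orthogonal projector it follows that $P_{j}\in\{0,\kb{v_{j}}{v_{j}}\}$. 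Writing $S=\{j\le n:P_{j}=\kb{v_{j}}{v_{j}}\}$, the success probability of $\P$ equals $\sum_{j\in S}p_{j}\,|\bk{v_{j}}{\psi_{j}}|^{2}$.

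The heart of the argument, and the step I expect to be the main obstacle, is to prove that $S$ contains at most one index. Suppose $j,j'\in S$ with $j\ne j'$. As distinct effects of one projective measurement, $\kb{v_{j}}{v_{j}}$ and $\kb{v_{j'}}{v_{j'}}$ are orthogonal, so $\bk{v_{j}}{v_{j'}}=0$; together with $\bk{v_{j}}{\psi_{j'}}=0=\bk{v_{j'}}{\psi_{j}}$ (which also follow from unambiguity), the decomposition $\ket{\psi_{j}}=\bk{v_{j}}{\psi_{j}}\ket{v_{j}}+\ket{w_{j}}$, $\ket{\psi_{j'}}=\bk{v_{j'}}{\psi_{j'}}\ket{v_{j'}}+\ket{w_{j'}}$ has $\ket{w_{j}}$ and $\ket{w_{j'}}$ both orthogonal to $\ket{v_{j}}$ and $\ket{v_{j'}}$, whence $\bk{\psi_{j}}{\psi_{j'}}=\bk{w_{j}}{w_{j'}}$. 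The delicate point --- where the bulk of the work lies, and where the hypothesis $\bk{\psi_{i}}{\psi_{j}}\ne 0$ for $i\ne j$ must be used in full --- is to deduce from this that in fact $\bk{\psi_{j}}{\psi_{j'}}=0$, a contradiction, so that $|S|\le 1$. Granting this, Step~2 gives $p_{\mathrm{USD}}(\mathcal{E},\P)\le p_{j^{\star}}\,|\bk{v_{j^{\star}}}{\psi_{j^{\star}}}|^{2}\le p_{j^{\star}}\le\max_{i}p_{i}$ for the unique $j^{\star}\in S$ (and $p_{\mathrm{USD}}(\mathcal{E},\P)=0$ if $S=\varnothing$); combined with the reduction of Step~1, this establishes $p_{\mathrm{USD}}^{\mathbb{SP}}(\mathcal{E})\le\max_{i}p_{i}$.
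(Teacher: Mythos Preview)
Your outline follows exactly the paper's route: reduce to a single projective measurement, observe that each nonzero conclusive effect must be the rank-one projector $\kb{v_j}{v_j}$ onto the line orthogonal to $\mathrm{span}\{\ket{\psi_{j'}}:j'\neq j\}$, and then argue that at most one conclusive effect can be nonzero. You rightly flag this last step as the crux and leave it open (``Granting this\ldots''). Unfortunately the step cannot be completed, because the claim $|S|\le 1$ is false. For $n=d=3$ and small $\epsilon>0$ take
\[
\ket{\psi_1}=\tfrac{1}{\sqrt{1+\epsilon^{2}}}(1,0,\epsilon)^{\top},\quad
\ket{\psi_2}=\tfrac{1}{\sqrt{1+\epsilon^{2}}}(0,1,\epsilon)^{\top},\quad
\ket{\psi_3}=(0,0,1)^{\top}.
\]
These are linearly independent with all pairwise overlaps nonzero, yet $\ket{v_1}=e_1$ and $\ket{v_2}=e_2$ are orthogonal, so $\P=(\kb{e_1}{e_1},\kb{e_2}{e_2},0,\kb{e_3}{e_3})$ is a genuine projective USD measurement with $|S|=2$. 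In your notation $\bk{w_1}{w_2}=\epsilon^{2}/(1+\epsilon^{2})\neq 0$, so the hoped-for contradiction $\bk{\psi_j}{\psi_{j'}}=0$ never materialises.

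Worse, this example also violates the lemma itself: with priors $p_1=p_2=\tfrac{1-\eta}{2}$ and $p_3=\eta$ one obtains $p_{\mathrm{USD}}(\mathcal{E},\P)=(1-\eta)/(1+\epsilon^{2})$, which for small $\epsilon,\eta$ is arbitrarily close to $1$, while $\max_i p_i=\tfrac{1-\eta}{2}<\tfrac12$. The paper's own argument for $|S|\le 1$ breaks at the same spot: it writes $\ket{\psi_i}=\alpha\ket{\phi_i}+\beta\ket{\psi_k}+\gamma\ket{r}$ with $\ket{r}\in\mathrm{span}\{\ket{\psi_l}:l\neq i,k\}$ and asserts both $\bk{\psi_k}{r}=0$ and $\beta\neq 0$; in the example above $\ket{r}\propto\ket{\psi_3}$ so $\bk{\psi_k}{r}\neq 0$, and the unique such decomposition has $\beta=0$. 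Thus neither your sketch nor the paper's proof can be repaired without an additional hypothesis on the ensemble.
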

\begin{proof}	
	We consider an ensemble  $\mathcal{E} = \{p_i, \ketbra{\psi_i}{\psi_i}\}_{i=1}^n$
	of quantum states in $\mathbb{C}^d$. We assume, that vectors $\ket{\psi_i}$ are 
	linearly independent and $\bk{\psi_i}{\psi_j} \neq 0$ for $i\neq j$. 
	Consider a projective measurement $\P=(P_1,\ldots,P_{n+1})$ on a subspace spanned by vectors 
	$\{\ket{\psi_i}\}_{i=1}^n$, i.e., $\H = \mathcal{L}in 
	(\{\ket{\psi_i}\}_{i=1}^n)$. We require $\P$ to satisfy unambiguity condition, i.e., for  $j \neq i$ we have $P_i \ket{\psi_j} = 0$. We will show that projective measurements satisfying above constraints are of 
	the form
	\begin{equation}
	\M^i = \left(
	\overbrace{0,0,\dots,0}^{i-1},
	\ketbra{\phi_i}{\phi_i},
	\overbrace{0,0,\dots ,0}^{n-i},
	\1-\ketbra{\phi_i}{\phi_i}
	\right) \label{eq:formPROJ} \ ,
	\end{equation}
	for some $j\leq n$ and a pure state $\ket{\phi_i}$ satisfying $\bk{\psi_j}{\phi_i}=0$ for $i\neq j$. 
	
	Assume that projector $P_i$ is non-zero -- then the unambiguity can be written in 
	terms of orthogonality with appropriate subspace
	\begin{equation}
	\mathrm{supp}(P_i) \perp \mathcal{L}in (\{\ket{\psi}\}_{j=1, j\neq i }^n).
	\end{equation} 
	In the above $\mathcal{L}in (\{\ket{\psi}\}_{j=1, j\neq i }^n)$ denotes a 
	linear subspace spanned by vectors $\{\ket{\psi}\}_{j=1, j\neq i }^n$.
	By the definition we have 
	\begin{equation}
	\mathrm{supp}(P_i) \subset \mathcal{H} = \mathcal{L}in (\{\ket{\psi}\}_{j=1}^n)\ ,
	\end{equation} 
	Thus we have obtained, that $\mathrm{supp}(P_i)$ is one--dimensional subspace  
	and therefore $P_i = \ketbra{\phi_i}{\phi_i}$.
	We will write vector $\ket{\psi_i}$ in terms of vector $\ket{\phi_i}$ and 
	$\ket{\psi_k}$, for some $k \neq i$, i.e., 
	\begin{equation}
	\ket{\psi_i} = \alpha \ket{\phi_i} + \beta \ket{\psi_k} + \gamma \ket{r},
	\end{equation}
	where $\ket{r} \in \mathcal{L}in ((\{\ket{\psi}\}_{j=1, j\neq i,k}^n)$ and 
	$\scalar{\psi_i}{r} =\scalar{\psi_k}{r} = 0$. Moreover 
	$|a|^2+|\beta|^2+|\gamma|^2=1$ and $\beta \neq 0$.
	Next we will show that projector $P_k$ must be zero. The USD property gives us 
	\begin{equation}
	0 = P_k \ket{\psi_i} = \alpha P_k \ket{\phi_i} + \beta P_k \ket{\psi_k} + 
	\gamma P_k \ket{r}.
	\end{equation}
	From our assumption the terms $P_k \ket{\phi_i}$ and $P_k \ket{r}$ are equal to 
	$0$. Thus we conclude that $P_k \ket{\phi_k}$ must be $0$. Therefore 
	$P_2 = 0$.
	
	Having shown Eq.\eqref{eq:formPROJ}, we see that all projective-simulable measurements that satisfy unambiguity condition are of the form $\N=\sum_{i=1}^n q_i \M^i $ , where $\lbrace q_i \rbrace_{i=1}^n$ is a probablility distribution. Therefore the success probability for such measurements can be bounded as follows
	\begin{equation}
	p_\mathrm{USD}(\mathcal{E},\N) \leq \max_i p_i \bra{\psi_i} P_i \ket{\psi_i} \leq \max_i p_i.
	\end{equation}
\end{proof}
\begin{exa}
\label{ex:Symm}Consider a uniform ensemble $\mathcal{E}_{\mathrm{\mathrm{sym}}}=\left\{ 1/d,\kb{\phi_{i}}{\phi_{i}}\right\} _{i=1}^{d}$
of so-called symmetric states in $\mathbb{C}^{d}$, i.e., states of the
form $\ket{\phi_{i}}=(1/\sqrt{d})\sum_{k=0}^{d-1}c_{k}\omega^{ik}\ket k$, where $\omega=\exp\left(\frac{2\pi i}{d}\right)$. In \cite{CHEFLES1998}
it was shown that $p_{\mathrm{USD}}^{\mathrm{POVM}}\left(\mathcal{E}_{\mathrm{sym}}\right)=d\min_{k}|c_{k}|^{2}$.
Since for any $\epsilon\in(0,1)$ we can set $\min_{k}|c_{k}|^{2}=(1-\epsilon)/d$.
We get that 
$p_{\mathrm{USD}}^{\mathrm{POVM}}\left(\mathcal{E}_{\mathrm{sym}}\right)=1-\epsilon$.
On the other hand, since for $\epsilon\in(0,1)$ we have $\bk{\phi_{i}}{\phi_{j}}\neq0$
and, by the virtue of Lemma \ref{lem:projSIMbound}, we get $p_{\mathrm{USD}}^{\mathrm{\mathbb{SP}}}\left(\mathcal{E}_{\mathrm{sym}}\right)\leq1/d$, 
and therefore (by combining with Lemma \ref{lem:boundSUCC}) we obtain
\begin{equation}
d(1-\epsilon)\leq p_{\mathrm{USD}}^{\mathrm{POVM}}\left(\mathcal{E}_{\mathrm{sym}}\right)/p_{\mathrm{USD}}^{\mathbb{SP}}\left(\mathcal{E}_{\mathrm{sym}}\right)\leq d\ .\label{eq:symmGAP}
\end{equation}
\end{exa}

The following example shows that the  inequality $p_{\mathrm{succ}}^{\mathrm{POVM}}\left(\mathcal{E}\right)\leq d\cdot p_{\mathrm{succ}}^{\mathbb{SP}}\left(\mathcal{E}\right)$ is saturated also for randomly chosen ensembles of quantum states.
\begin{exa}
\label{ex:Random}Consider a uniform ensemble $\mathcal{E}_{\mathrm{\mathrm{ran}}}=\left\{ 1/n,\kb{\varphi_{i}}{\varphi_{i}}\right\} _{i=1}^{d}$
of $d\leq D$ independently-chosen Haar-random pure states in $\mathbb{C}^{D}$
\footnote{By Haar-random pure states in $\mathbb{C}^{D}$ we understand random
states $\kb{\varphi}{\varphi}=U\kb{\varphi_{0}}{\varphi_{0}}U^{\dagger}$, where
$\kb{\varphi_{0}}{\varphi_{0}}$ is a fixed pure state from $\mathbb{C}^{D}$ and $U$
is $D\times D$ Haar-distributed unitary matrix.}. 
We are interested in values of the ratio of $p_{\mathrm{USD}}^{\mathrm{\mathrm{POVM}}}\left(\mathcal{E}_{\mathrm{\mathrm{ran}}}\right)/ p_{\mathrm{USD}}^{\mathbb{SP}}\left(\mathcal{E}_{\mathrm{\mathrm{ran}}}\right)$
that appear typically, i.e., with high probability over the choice
of states $\kb{\varphi_{i}}{\varphi_{i}}$. In Part A of SM \cite{supp} we show that in the limit $d,D\rightarrow \infty$
, while $d/D\rightarrow\gamma\in(0,1)$, with high probability we have
\begin{equation}
d\left(1-\gamma\right)^{2}\leq p_{\mathrm{USD}}^{\mathrm{POVM}}\left(\mathcal{E}_{\mathrm{sym}}\right)/p_{\mathrm{USD}}^{\mathbb{SP}}\left(\mathcal{E}_{\mathrm{sym}}\right)\leq d\ . \label{eq:typEXAMPL}
\end{equation}
Hence, for generic ensembles $\mathcal{E}_{\mathrm{\mathrm{ran}}}$
 the inequality from Lemma \ref{lem:boundSUCC} is asymptotically
saturated in the limit $d/D\rightarrow\gamma$ (up to the possible
correction $\left(1-\gamma\right)^{2}$). 
\end{exa}
The above considerations give a fairly complete
understanding of relative power of projective and generalized measurements
for USD in large dimensions. To our best knowledge, the only other
quantum task for which this kind of analysis was carried out is quantum
filtering \cite{Bergou2003}. The problem of  USD of random states have not been studied previously. So far the research efforts focused on minimal error discrimination \cite{Montanaro2007,Puchala2016} or on distinguishing between states that were altered by application of the infinitisimal unitary transformation \cite{OszmaniecMetro2016}.

\section{Illustration on IBM Quantum Processor}\label{sec:IBM}
IBM Q Experience is an online platform that allows to remotely perform experiments on IBM's quantum processors \cite{ibm_q_experience,qiskit,cross_qasm}.  
The devices themselves consist of superconducting transmon qubits \cite{koch_transmons}, which are manipulated via coupling to the external microwave field which, in principle, offers a full control over the qubits in a given processor.
In particular, this interaction allows to implement arbitrary one-qubit unitary and a two-qubit CNOT gate (via cross-resonance effect \cite{sheldon_cross_resonance}). 
By combining those gates with projective measurements in computational basis allowed on IBM's quantum devices, one is able to construct arbitrary two-qubit quantum circuit \cite{vatan_circuits}. 
We have used access to the 5-qubit quantum device \textit{ibqmx4} to implement three different POVMs on one qubit via our scheme and via well-known method of Naimark's construction \cite{Peres2006}. 
Three implemented POVMs were: 4-outcome tetrahedral \cite{Renes2003}, 3-outcome trine \cite{Jozsa2003}, and a randomly generated 4-outcome measurement. 
In what follows we describe in detail how such implementation proceeded.

Naimark's construction requires extension of a system of interest by an ancilla system. 
When we get an extended space, we construct an unitary on this space, which (followed by a measurement) implements desired POVM on our system. 
In the case of IBM's qubits, extension of space simply required usage of two, instead of a single qubit.
In order to implement a POVM on single qubit, we have implemented Naimark's unitary on two-qubit space and measured the system, obtaining the outcomes corresponding to those of a POVM.
We note that, while in principle to implement $n$-outcome measurement one needs extension to only $n$-dimensional space, in the case of qubits the actual Hilbert space dimension is restricted to be a power of $2$.
Therefore, to implement $3$-outcome measurement via Naimark's extension, we needed to construct $4$-dimensional unitary which is a direct sum of $3$-dimensional Naimark's unitary and the number $1$. 

The implementation by our scheme proceeded as follows.
As described in previous sections, for a measurement of the form
$\M=\left(\alpha_{1}\kb{\psi_{i}}{\psi_{i}},\ldots,\alpha_{n}\kb{\psi_{n}}{\psi_{n}}\right)$, the new scheme requires implementation of each of the projective measurement $\P^{j}=\left(\ketbra{\psi_{j}}{\psi_{j}},\mathds{1}-\ketbra{\psi_{j}}{\psi_{j}}\right)$ with probability equal to $\frac{\alpha_{j}}{2}$ (in the case of qubits) in each experimental run (sampling). In principle, it is possible to classically randomize choice of the projective measurement in each experimental run, accordingly to such probability distribution (as illustrated in detail in Example~\ref{ex::tetrahedral_implementation}). 
Unfortunately, on IBM's quantum device it would be practically infeasible to implement projective measurements once at the time, due to the fact that there is a time-limiting queue of jobs requested by users, and such randomization would require thousands of job requests. 
In order to overcome this obstacle, we have decided to simulate randomization by gaining statistics for every $\P^{j}$ from number of experimental runs proportional to $\alpha_{j}$. 
Since maximal number of experiments performed in one commissioned job on IBM Q Experience is $8192$, we have set number of experiments implementing projective measurement corresponding to eigenvalue $\alpha_{j}$ as $N_{j}\coloneqq 8192\  \frac{\alpha_{j}}{\max_{j}\alpha_{j}}$ ,
where $j\in \left\{1,2,\dots,n \right\}$. 
In other words, we have set the maximum probability to correspond to maximum number of experimental runs possible on IBM Q Experience. 
Then, we commissioned experiments for projective measurements with number of runs equal to or lower than the maximal one, in accordance to the values of $\alpha_{j}$'s. 
We note that such method of randomization is equivalent to sampling, provided we assume stability of the quantum device in time. 
Final step was to normalize all experimental counts to relative frequencies, simply by dividing all statistics by the number of runs for all projective measurements. 
It's worth adding that in the case of tetrahedral and trine POVMs, which are symmetric, all $\alpha_{j}$'s were the same, hence probability distributions were uniform for them. 

To compare the quality of both implementations we have performed quantum measurement tomography (QMT) \cite{Lundeen_QDT}, which is a procedure of tomographic reconstruction of POVM based on its implementation on the informationally-complete set of quantum states. 
Chosen set of quantum states consisted of both eigenstates of Pauli's $\sigma_{z}$ and two eigenstates of $\sigma_{x}$ and $\sigma_{y}$ corresponding to positive eigenvalues. 
We have reconstructed POVMs via linear inversion. 
We note here that in the case of our scheme, the statistics taken for the QMT were additionally normalized only to non-rejected (correct) outcomes.
Details of QMT, together with remarks regarding the impact of noise, are provided in the Appendix~\ref{app:QMT}.
The explicit form of both POVMs to-be-implemented and the reconstructed ones are given in the Appendix~\ref{app:povms_explicit}.

As a figure of merit we used the operational distance \cite{zbyszek_tv_distance} between POVMs $\M$ and $\N$, which may be calculated as
\begin{equation}\label{eq::dop}
D_{op}\rbracket{\M,\N}=\max_{x\in X}\ ||\sum_{i\in x} \rbracket{M_i-N_i}||\ ,
\end{equation}
where $||\cdot||$ denotes the operator norm and the maximization is over all combinations of indices enumerating effects (i.e., all outcomes).
Naturally, in this case $\M$ corresponds to ideal POVM to-be-implemented, and $\N$ to its tomographic reconstruction.
The operational distance given in \eq{eq::dop} has a nice operational interpretation of $D_{op}(\M,\N)=2p_{\mathrm{dist}}(\M,\N) -1$, where $p_{\mathrm{dist}}(\M,\N)$ is the optimal probability of distinguishing between measurements $\M$ and $\N$ \emph{without using entanglement} \cite{Sedlak2014,zbyszek_tv_distance}.
The results of experiments are given in Table \ref{table_QMT}.

\begin{table}[h]
\begin{tabular}{|l|c|c|}
\hline
\multicolumn{1}{|c|}{POVM} & Naimark construction & Our scheme \\ \hline
Tetrahedral                & 0.117                  & 0.023              \\ \hline
Trine                      & 0.141                  & 0.022              \\ \hline
Random 4-effect            & 0.168                  & 0.031              \\ \hline
\end{tabular}
\caption{ Operational distances $D_{op}$ between POVMs to-be-implemented and those obtained via QMT for both methods of implementation.}
\label{table_QMT}
\end{table}

It is clear that our scheme performs better than the Naimark's construction. We would like to stress that this is the case despite rejecting half of the data (this results from postselection used in our method).  A likely explanation of these results is the much greater amount of noise occurring in the implementation of two-qubit unitaries required for Naimark's construction compared to local unitaries needed for our scheme (see pictorial representation of exemplary quantum circuits in Appendix~\ref{app:circuits}). 

\section{Discussion}\label{sec:Concl}

We heve presented a new method for implementing generalized measurements
on finite-dimensional Hilbert spaces that uses only classical resources
(randomization and post-processing), projective measurements and postselection.
Importantly, the scheme does not require to implement projective measurements
on extended Hilbert space. This simplification comes at the
expense or probabilistic nature of the method - in a given experimental
run it succeeds with probability $1/d$, where $d$ is the dimension
of the system. We have also used this result to find a (saturable)
upper bound on the relative power of POVMs and projective measurements
for the problem of unambiguous state discrimination.

We believe that our scheme will be useful in experimental implementations
of generalized measurements. Complicated global unitaries, that are
necessary in the Naimark construction, often introduce additional
errors. We have observed this kind of behavior in the experiments
carried out on IBM Q Experience platform. Of course, if a given
experimental setup allows to reliably implement unitary operations
on the extended system, then our method will not be beneficial. However, we expect that for the inherently noisy near-term quantum devices  \cite{NISC2017},  our scheme might prove advantageous over the standard method that requires extension of the Hilbert space. 

At the end we would like to state a number of possible directions
of further research. First, it would be interesting to explore if the techniques presented here can be used to show that Bell nonlocality with respect to projective measurements is equivalent to Bell nonlocality with respect to POVMs \cite{Barrett2002,Brunner2014}, despite the fact that postselection performed in Bell scenario can be used to violate Bell inequalities by local models \cite{Gisin99}. Second, it is interesting to relate the probability, with which a given generalized measurement $\M$ can be simulated using projective measurements and postselection, with the amount of white noise that is necessary to simulate with projective measurements \cite{Oszmaniec2017}. 
Also, it is intriguing to connect probability of success with \emph{entanglement cost} of generalized measurements \cite{Jozsa2003}. 
Another interesting  question will be to ask how postselection of measurements can be meaningfully used in infinite-dimensional setting. 
Last but not least, from the foundational perspective it is natural to explore the role of postselection for measurements in the General Probabilistic Theories  that go beyond quantum mechanics (a recent work \cite{Filippov2018} studied the role classical randomization and post-processing exactly in this context).

\begin{acknowledgments}
We thank Antonio Ac{\'i}n, Anubhav Chaturvedi,  Marco T\'ulio Quintino and Andreas Winter for
interesting and stimulating discussions. We are especially grateful
to Robert Griffiths for suggesting to use postselection together with
projective measurements to simulate POVMs.  M.O and F.B.M acknowledge the support
of Homing programme of the Foundation for Polish Science co-financed
by the European Union under the European Regional Development Fund. Z.P acknowledges the support of NCN grant 2016/22/E/ST6/00062.
We acknowledge use of the IBM Quantum Experience for this work. The
views expressed are those of the authors and do not reflect the official
policy or position of IBM or the IBM Quantum Experience team.
\end{acknowledgments}

\bibliography{refmeas1}

\onecolumngrid
\appendix
\section{Details regarding Example 3}\label{app::details_ex_3}

In the main text we did not prove the technical claims given in Example 3. Here we present a justification of \eq{eq:symmGAP}. 

By the seminal results of Eldar \cite{Eldar2003} we know that for any uniform ensemble of quantum states $\ket{\psi_{i}}$ there exists so-called ``equal probability measurement''  $\M_{\mathrm{eq}}$ (this measurement $\M_{\mathrm{eq}}$ can be regarded as the analogue of the ``pretty-good measurement'' \cite{Barnett09} \footnote{For the details of the construction of $\M_{\mathrm{eq}}$ see Section 4 of \cite{Eldar2003}. } used, e.g, in the context of minimal-error state discrimination . The measurement  $\M_{\mathrm{eq}}$ attains the success probability \begin{equation}
p_{\mathrm{USD}}\left(\mathcal{E},\mathcal{\M_{\mathrm{eq}}}\right)=\lambda_{\mathrm{min}}\left(C\right)\ ,
\end{equation} 
where $\lambda_{\mathrm{min}}(C)$ is the minimal eigenvalue of the $d\times d$ correlation matrix $C_{ij}=\bk{\psi_{i}}{\psi_{j}}$. Therefore, for the problem at hand we get the lower bound $p_{\mathrm{USD}}^{\mathrm{POVM}}\left(\mathcal{E}_{\mathrm{\mathrm{ran}}}\right)\geq\lambda_{\mathrm{min}}(C_{\mathrm{ran}})$ with $C_{\mathrm{ran}}$ being the correlation matrix of Haar-random unit vectors $\ket{\varphi_{i}}$ form $\C^D$. The minimal eigenvalue of $C_{\mathrm{ran}}$ has been studied in the mathematical literature \cite{RanomCorr2004} and for typical ensembles $\mathcal{E}_{\mathrm{\mathrm{ran}}}$ in the limit: $d,D\rightarrow\infty$, while $d/D\rightarrow\gamma\in(0,1)$, we have $\lambda_{\mathrm{min}}(C_{\mathrm{rand}})\approx\left(1-\gamma\right)^{2}$. In order to bound the success probability $p_{\mathrm{USD}}^{\mathrm{\mathbb{SP}}}\left(\mathcal{E}_{\mathrm{ran}}\right)$ we note that generic Haar random vectors $\left\{ \ket{\varphi_{i}}\right\} _{i=1}^{d}$
are linearly independent but not orthogonal and therefore, by the virtue of Lemma 2, we have $p_{\mathrm{USD}}^{\mathrm{\mathbb{SP}}}\left(\mathcal{E}_{\mathrm{ran}}\right)\leq1/d$. On the other hand we have $p_{\mathrm{USD}}^{\mathbb{SP}}\left(\mathcal{E}_{\mathrm{ran}}\right)\leq(1/d)p_{\mathrm{USD}}^{\mathrm{POVM}}\left(\mathcal{E}_{\mathrm{\mathrm{ran}}}\right)$. 
Combining this with $\left(1-\gamma\right)^{2}\leq p_{\mathrm{USD}}^{\mathrm{POVM}}\left(\mathcal{E}_{\mathrm{\mathrm{ran}}}\right)\leq 1$, we finally obtain 
\begin{equation}
d\left(1-\gamma\right)^{2}\leq p_{\mathrm{USD}}^{\mathrm{POVM}}\left(\mathcal{E}_{\mathrm{sym}}\right)/p_{\mathrm{USD}}^{\mathbb{SP}}\left(\mathcal{E}_{\mathrm{sym}}\right)\leq d\label{eq:typEXAMPL}\ .
\end{equation}

\section{Quantum Measurement Tomography}\label{app:QMT}
In order to reconstruct measurement done on the quantum system, one needs to perform a detector tomography. 
In this section we describe a simple method we chose to do so for implemented measurements.
Every two-dimensional effect $M_{i}$ can be written in the form of re-scaled Bloch vector
\begin{align}
M_{i}=\frac{\alpha_{i}}{2} \left(\mathds{1}+\vec{n}_{i}\vec{\sigma}\right) \ ,
\end{align}
where $\alpha_{i}\in\left(0,1\right]$ and $|\vec{n}|\leq1$. To obtain value of $\alpha_{i}$ and three components of real vector $\vec{n}_{i}$, we used a Born's rule 
$ p_{i}=Tr\left(M_{i}\rho\right)=Tr\left(\frac{\alpha_{i}}{2} \left(\mathds{1}+\vec{n}_{i}\vec{\sigma}\right)\rho\right)
$ and a freedom in choosing initial state $\rho$.

We performed four experiments for four different quantum states, which we chose to be eigenstates of Pauli matrices. 
Two of them were both eigenstates of $\sigma_{z}$, while two other were eigenstates of $\sigma_{x}$ and $\sigma_{y}$ corresponding to positive eigenvalues. 
Elementary calculations show that to obtain $\alpha_{i}$ one can add up statistics obtained for both eigenstates of $\sigma_{z}$. 
Having this value calculated, in order to obtain components of Bloch vector, one has to simply transform Born's rule equation and use statistics obtained for eigenstates of all Pauli matrices corresponding to positive eigenvalues.

By repeating the above procedure for all effects of POVM $\M$, one can reconstruct the whole measurement.
In general, this method may result in reconstructing unphysical, non-positive operators.
At the beginning of our work with IBM Q devices, we have been surprised that such thing has never occur, nor in Naimark's case, nor in our scheme method. We note that in principle, in order to avoid reconstruction of unphysical operators, one needs to implement optimization algorithms, such as in \cite{lundeen_QDT}. The systematic positivity of obtained operators probably results from the nature of noise in IBM Q devices.

The analysis of errors occuring during the measurements in IBM devices lies outside the scope of this work and will be the subject of the future work \cite{Maciejewski2018}. 
However, already at this point we describe a few subtle issues we encountered while dealing with readout errors.  
We have noticed, that a systematic error occurs in IBM Q devices, namely there exists a constant bias towards obtaining the '0' result for any kind of qubit circuit. 
Natural explanation of such an error might be a decoherence occuring during a readout. For our scheme of implementation, we firstly identified a '0' as a non-postselected result and a '1' as a postselected one. 
Due to the bias error, it resulted in all cases in postselection on average on \textit{more} than $1/2$ results. 
To fight this bias, we have doubled a number of implemented circuits, and for half of them we simply applied an $x$ gate and relabeled the outcomes. 
The data obtained from circuits with additional $x$ gate resulted in postselection on average on \textit{less} than $1/2$ of the results. 
Finally, averaged data from standard circuits and circuits with $x$ gate resulted in postselection on average on around $1/2$ data and always in reconstruction of positive operators. 

The analogous method was used in the Naimark implementation, where there are 4 possible $x$ gates configurations - i) no $x$ gates, ii) $x$ gate only on first qubit, iii) $x$ gate only on second qubit, iv) $x$ gates on both qubits. This procedure also led to reconstruction in which all effects were positive operators.

In practice, to compare Naimark's construction with our scheme, we needed equal numbers of experiments to calculate probabilities $p_{i}$. 
As described in the main text, maximum number of experiments in a single job request on IBM Q experience is $8192$, while chosen method of randomization required $N>8192$ runs, which exact value was dependant on eigenvalues of effects.
To compare measurements implemented by both methods, for Naimark's construction we have performed around (up to divisibility of $N$ by $3$ or $4$) N experiments.  After gaining statistics we calculated the operational distance given in \eq{eq::dop}
It's quite interesting to note that for 3-outcome measurements, when theoretically there should be only 3 possible outcomes, in experimental realisation via Naimark's construction there were always some additional clicks on the 4th outcome. 
In QMT this resulted in the appearance of the 4th "residual" effect, which was, naturally, taken into account for computation of $D_{op}(\M,\M_{exp})$.

\section{Generalised measurements for QMT experiments}\label{app:povms_explicit}

In this section, we provide explicit matrix forms of all to-be-implemented POVMs and the ones reconstructed via method described above. The reconstructed matrix elements are given with numerical precision of 3 digits.

\textbf{Tetrahedral POVM }\cite{Renes04Crypt}

\noindent Tetrahedral measurement $\M_{\mathrm{tetra}}$ consist of effects with Bloch vectors pointing to vertices of tetrahedron inscribed in the Bloch sphere, 
\begin{align}
M_{1}= \begin{bmatrix}%
\frac{1}{2}&0\\[6pt]
0&0
\end{bmatrix}\ ,\   M_{2}= \begin{bmatrix}%
\frac{1}{6}&\frac{1}{3\sqrt{2}}\\[6pt]
\frac{1}{3\sqrt{2}}&\frac{1}{3}
\end{bmatrix}\ ,\   
M_{3}= \begin{bmatrix}%
\frac{1}{6}&\frac{1}{3\sqrt{2}}e^{-i\frac{2\pi}{3}}\\[6pt]
\frac{1}{3\sqrt{2}}e^{+i\frac{2\pi}{3}}&\frac{1}{3}
\end{bmatrix}\ ,\   M_{4}= \begin{bmatrix}%
\frac{1}{6}&\frac{1}{3\sqrt{2}}e^{+i\frac{2\pi}{3}}\\[6pt]
\frac{1}{3\sqrt{2}}e^{-i\frac{2\pi}{3}}&\frac{1}{3}
\end{bmatrix}\ .
\end{align}

\noindent Quantum Measurement Tomography for implementation using our scheme resulted in reconstruction of following effects:
\begin{align}
M_{1}= \begin{bmatrix}%
0.489 &-0.007+0.007i\\[6pt]
-0.007-0.007i&0.016
\end{bmatrix}\ ,\   M_{2}= \begin{bmatrix}%
0.167&0.226-0.003i\\[6pt]
0.226+0.003i&0.327
\end{bmatrix}\ ,\   
\end{align}

\begin{align} 
M_{3}= \begin{bmatrix}%
0.169&-0.107-0.195i\\[6pt]
-0.107+0.195i&0.330
\end{bmatrix}\ ,\   M_{4}= \begin{bmatrix}%
0.175&-0.112+0.191i\\[6pt]
-0.112-0.191i&0.327
\end{bmatrix}\ .
\end{align}

\noindent Quantum Measurement Tomography for implementation using Naimark's construction resulted in reconstruction of following effects:
\begin{align}
M_{1}= \begin{bmatrix}%
0.462 & -0.025-0.013i\\[6pt]
-0.025+0.013i&0.052
\end{bmatrix}\ ,\   M_{2}= \begin{bmatrix}%
0.169&0.167-0.017i\\[6pt]
0.167+0.017i&0.282
\end{bmatrix}\ ,\   
\end{align}

\begin{align} 
M_{3}= \begin{bmatrix}%
0.187&-0.079-0.162i\\[6pt]
-0.079+0.162i&0.294
\end{bmatrix}\ ,\   M_{4}= \begin{bmatrix}%
0.182&-0.062+0.192i\\[6pt]
-0.062-0.192i&0.371
\end{bmatrix}\ .
\end{align}

\textbf{Trine POVM} \cite{trine_povm}

\noindent Trine measurement $\M_{\mathrm{trine}}$ consist of effects with Bloch vectors pointing to vertices of equilateral triangle inscribed in the Bloch sphere,
\begin{align}
M_{1}= \begin{bmatrix}%
\frac{2}{3}&0\\[6pt]
0&0%
\end{bmatrix}\ ,\   M_{2}= \begin{bmatrix}%
\frac{1}{6}&\frac{1}{2\sqrt{3}}\\[6pt]
\frac{1}{2\sqrt{3}}&\frac{1}{2}
\end{bmatrix}\ ,\   M_{3}= \begin{bmatrix}%
\frac{1}{6}&-\frac{1}{2\sqrt{3}}\\[6pt]
-\frac{1}{2\sqrt{3}}&\frac{1}{2}
\end{bmatrix}\ .
\end{align}
\textit{}
\noindent Quantum Measurement Tomography for implementation using our scheme resulted in reconstruction of following effects:
\begin{align}
M_{1}= \begin{bmatrix}%
0.645&-0.004+0.004i\\[6pt]
-0.004-0.004i&0.021
\end{bmatrix}\ ,\   M_{2}= \begin{bmatrix}%
0.178&0.272-0.002i\\[6pt]
0.272+0.002i&0.489
\end{bmatrix}\ ,\ 
\end{align}

\begin{align}  
M_{3}= \begin{bmatrix}%
0.177&-0.268-0.001i\\[6pt]
-0.268+0.001i&0.490
\end{bmatrix}\ .
\end{align}

\noindent Quantum Measurement Tomography for implementation using Naimark's construction resulted in reconstruction of following effects:
\begin{align}
M_{1}= \begin{bmatrix}%
0.599&0.003-0.021i\\[6pt]
0.003+0.021i&0.072
\end{bmatrix}\ ,\   M_{2}= \begin{bmatrix}%
0.192&0.210+0.004i\\[6pt]
0.210-0.004i&0.403
\end{bmatrix}\ ,\   
\end{align}

\begin{align}  
M_{3}= \begin{bmatrix}%
0.170&-0.224+0.019i\\[6pt]
-0.224-0.019i&0.460
\end{bmatrix}\ ,\   M_{4}= \begin{bmatrix}%
0.038&0.011-0.003i\\[6pt]
0.011+0.003i&0.065
\end{bmatrix}\ .
\end{align}
Note additional, fourth effect mentioned in the previous section.

\textbf{Random 4-outcome POVM}

\noindent Last of implemented POVMs was constructed randomly. To generate a random POVM, we constructed Haar-random 4-dimensional unitary and took first two elements of each column to define vectors that were used as effects of 4-outcome POVM.
Matrix elements are given with numerical precision of 3 digits,
\begin{align}
M_{1}= \begin{bmatrix}%
0.288&0.061-0.049i\\[6pt]
0.061+0.049i&0.021
\end{bmatrix}\ ,\   M_{2}= \begin{bmatrix}%
0.063&0.070-0.109i\\[6pt]
0.070+0.109i&0.264
\end{bmatrix},
\end{align}

\begin{align}
M_{3}= \begin{bmatrix}%
0.470&0.17-0.002i\\[6pt]
0.17+0.002i&0.062
\end{bmatrix}\ ,\   M_{4}= \begin{bmatrix}%
0.179&-0.301+0.160i\\[6pt]
-0.301-0.160i&0.653
\end{bmatrix}\ .
\end{align}

\noindent Quantum Measurement Tomography for implementation using our scheme resulted in reconstruction of following effects:
\begin{align}
M_{1}= \begin{bmatrix}%
0.281&0.054-0.046i\\[6pt]
0.054+0.046i&0.029
\end{bmatrix}\ ,\   M_{2}= \begin{bmatrix}%
0.068&0.068-0.101i\\[6pt]
0.068+0.101i&0.261
\end{bmatrix}\ ,\   
\end{align}

\begin{align}
M_{3}= \begin{bmatrix}%
0.455&0.160-0.002i\\[6pt]
0.160+0.002i&0.078
\end{bmatrix}\ ,\   M_{4}= \begin{bmatrix}%
0.196&-0.282+0.149i\\[6pt]
-0.282-0.149i&0.632
\end{bmatrix}\ .
\end{align}

\noindent Quantum Measurement Tomography for implementation using Naimark's construction resulted in reconstruction of following effects:
\begin{align}
M_{1}= \begin{bmatrix}%
0.313&0.060-0.044i\\[6pt]
0.060+0.044i&0.064
\end{bmatrix}\ ,\   M_{2}= \begin{bmatrix}%
0.089&0.038-0.079i\\[6pt]
0.038+0.079i&0.303
\end{bmatrix}\ ,\   
\end{align}

\begin{align}  
M_{3}= \begin{bmatrix}%
0.411&0.129+0.009i\\[6pt]
0.129-0.009i&0.106
\end{bmatrix}\ ,\   M_{4}= \begin{bmatrix}%
0.187&-0.227+0.114i\\[6pt]
-0.227-0.114i&0.528
\end{bmatrix}\ .
\end{align}

\section{General quantum circuits used in implementation}\label{app:circuits}
In this section we provide a pictorial presentation of quantum circuits needed for implementation of general qubit POVMs using Naimark's construction method and our scheme. In the figures presented below, $\mathrm{SU}(2)$ denotes arbitrary local unitary on qubit and $H$ is a Hadamard gate. Quantum register is denoted by $qr$ with subscript being a label of a qubit, whereas classical register is merged into one line denoted by $cr$.

\begin{figure}[h]
	\begin{centering}
		\includegraphics[width=1\textwidth]{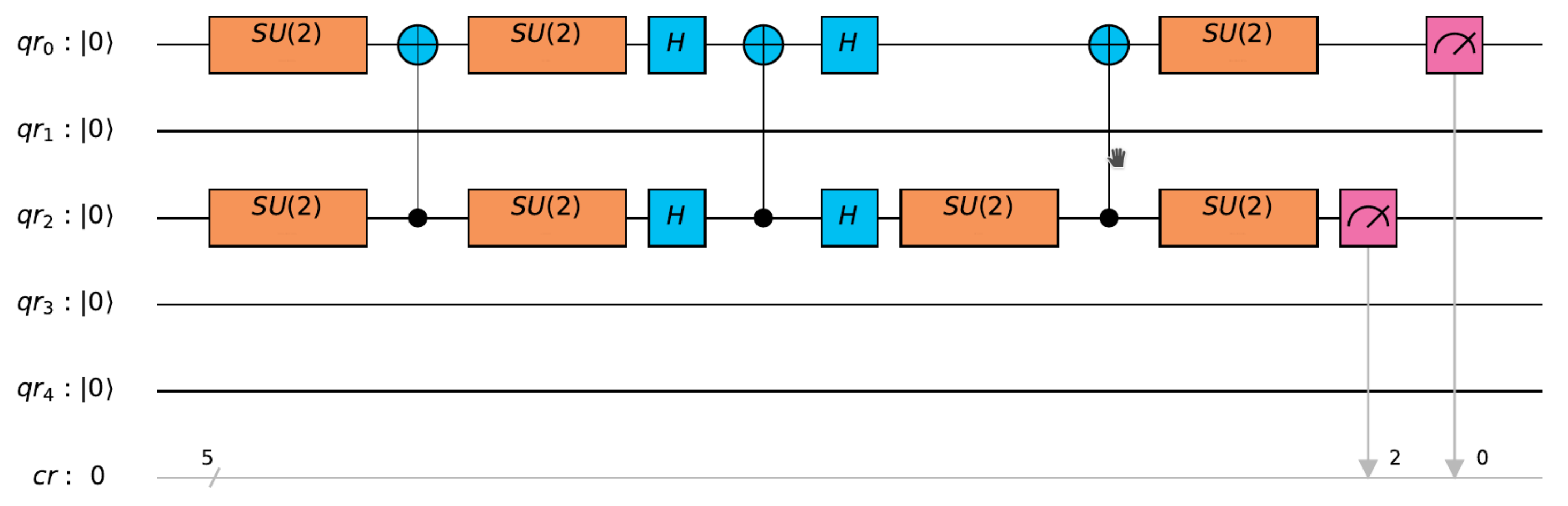}\caption{\label{fig:Circuit_naimark} Quantum circuit needed to implement POVM using Naimark's construction method. Additional Hadamard gates at the center are needed due to one-way connectivity of CNOT gates in IBM hardware.}
		\par\end{centering}
\end{figure}

\begin{figure}[h]
	\begin{centering}
		\includegraphics[width=0.3\textwidth]{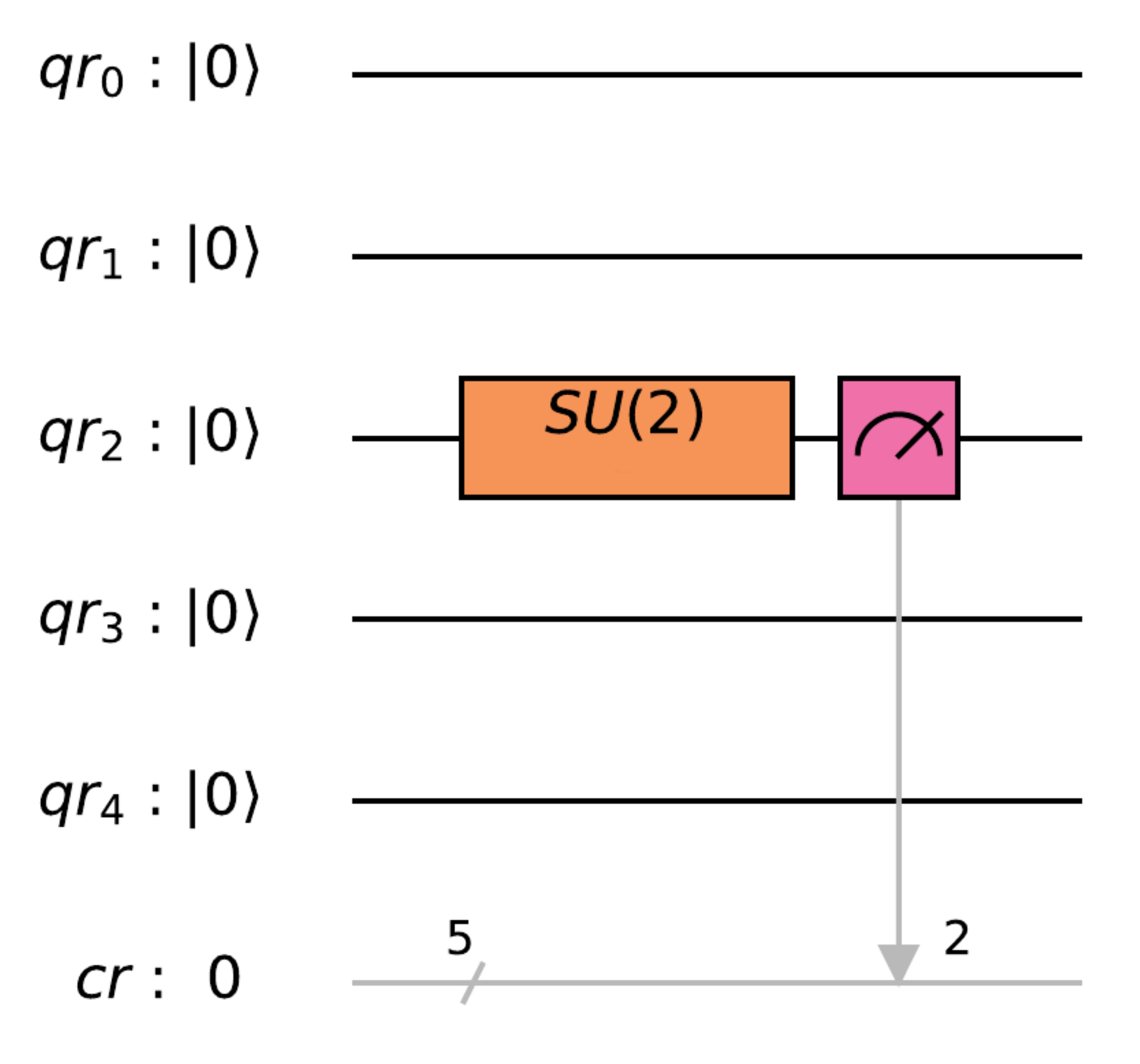}\caption{\label{fig:Circuit_randpost} Quantum circuit needed to implement POVM using our scheme.}
		\par\end{centering}
\end{figure}

In principle, in the qubit case Naimark's construction requires implementation of a \textit{single} of circuit of the general form shown in Figure \ref{fig:Circuit_naimark}, whereas our scheme requires implementation of \textit{multiple} (equal to the number of outcomes of simulated POVM) circuits of the form shown in Figure \ref{fig:Circuit_randpost}.

\end{document}